\newcommand*\bigcdot{\mathpalette\bigcdot@{.5}}
\newcommand*\bigcdot@[2]{\mathbin{\vcenter{\hbox{\scalebox{#2}{$\m@th#1\bullet$}}}}}
\newcommand\notsotiny{\@setfontsize\notsotiny\@vipt\@viipt}
\newcommand{\norm}[1]{\left\|{#1}\right\|}
\def\01{\{0,1\}}
\newcommand{\eps}{\varepsilon}
\theoremstyle{plain}
\newtheorem{theorem}{Theorem}
\definecolor{applegreen}{rgb}{0.0, 0.5, 0.0}
\newcommand{\Id}{\ensuremath{\mathop{\rm Id}\nolimits}}
\newcommand{\poly}{\mbox{\rm poly}}
\DeclareMathOperator{\Tr}{Tr}
\newcommand{\beq}{\begin{equation}}
\newcommand{\eeq}{\end{equation}}
\newcommand{\beqn}{\begin{equation*}}
\newcommand{\eeqn}{\end{equation*}}
\newcommand{\beqr}{\begin{eqnarray}}
\newcommand{\eeqr}{\end{eqnarray}}
\newcommand{\beqrn}{\begin{eqnarray*}}
\newcommand{\eeqrn}{\end{eqnarray*}}
\newcommand{\bmline}{\begin{multline}}
\newcommand{\emline}{\end{multline}}
\newcommand{\bmlinen}{\begin{multline*}}
\newcommand{\emlinen}{\end{multline*}}
\theoremstyle{plain}
\newtheorem{claim}[theorem]{Claim}
\theoremstyle{definition}
\newtheorem{problem}[theorem]{Problem}
\theoremstyle{remark}
\newtheorem{remark}[theorem]{Remark}
\renewenvironment{proof}[1][]{
    \begin{trivlist}
    \item[\hspace{\labelsep}{\em\noindent Proof#1:\/}]}
    {{\hfill$\Box$}
    \end{trivlist}
}
\newtheoremstyle{named}{}{}{\itshape}{}{\bfseries}{.}{.5em}{\thmnote{#3}}
\theoremstyle{named}
\title{Simple algorithms to test and learn  local Hamiltonians}
\author{ 
Francisco Escudero Gutiérrez\thanks{Qusoft and CWI \href{feg@cwi.nl}{feg@cwi.nl} } \and}
\date{ \today }
\begin{document}

\maketitle
\begin{abstract}
    We consider the problems of testing and learning  an $n$-qubit $k$-local  Hamiltonian from queries to its evolution operator with respect the 2-norm of the Pauli spectrum, or equivalently, the normalized Frobenius norm. 
    For testing whether a Hamiltonian is $\eps_1$-close to $k$-local or $\eps_2$-far from $k$-local, we show that $O(1/(\eps_2-\eps_1)^{8})$ queries suffice. This solves two questions posed in a recent work by Bluhm, Caro and Oufkir. For learning up to error $\eps$, we show that $\exp(O(k^2+k\log(1/\eps)))$ queries suffice. Our proofs are simple, concise and based on Pauli-analytic techniques.
\end{abstract}
\section{Introduction}
In this work we consider the problems of testing and learning a local Hamiltonian from its time evolution operator. These kinds of Hamiltonians govern the dynamics of many physical systems, which motivates the problem of testing if an unknown Hamiltonian is local and, if it is local, to learn it in an efficient way. In fact, there is vast and recent literature about learning a local Hamiltonian from its time evolution operator \cite{Silva2011Practical, Bairey2019Learning, Zubida2021Optimal, haah2022optimal, wilde2022scalably, yu2023robust, caro2023learning, Dutkiewicz.2023, huang2023heisenberg, castaneda2023Hamiltonian, li2023heisenberglimited, möbus2023dissipationenabled, franca2024efficient, Gu2022Practical}, but the first testing algorithm was only proposed recently by Bluhm, Caro and Oufkir \cite{bluhm2024Hamiltonianv1}. There are also many of results about learning a Hamiltonian from its Gibbs state \cite{anshu2021sample, haah2022optimal, rouze2023learning, onorati2023efficient, bakshi2023learning, Gu2022Practical}, but in this work we will only consider the model where one accesses the Hamiltonian through queries to its evolution operator.

An $n$-qubit Hamiltonian $H$ is a self-adjoint operator acting on $(\mathbb C^{2})^{\otimes n}$. As such, it can be expanded in terms of the Pauli strings as 

\begin{equation*}
    H=\sum_{x\in \{0,1,2,3\}^n} h_x\sigma_x,
\end{equation*}
where $h_x$ are real numbers and $\sigma_x=\otimes_{i\in [n]}\sigma_{x_i},$ where $\sigma_0=\Id_2,\, \sigma_1=X,\,\ \sigma_2=Y,\, \sigma_3=Z$. The coefficients $h_x$ are known as the Pauli spectrum of the Hamiltonian. A Hamiltonian $H$ is $k$-local if it is supported on Pauli strings that act trivially on all but at most $k$ qubits. In other words, $H$ is $k$-local if $h_x=0$ for all $x\in\{0,1,2,3\}^n$ that take values different from $0$ on more than $k$ sites. If $H$ is the Hamiltonian describing the dynamics of a certain physical system, then the states of that system evolve according to the time evolution operator $U(t)=e^{-iHt}$. This means that if $\rho(0)$ is the state at time $0$, at time $t$ the state will have evolved to $\rho(t)=U(t)\rho(0)  U^{\dagger}(t).$ Hence, to test and learn a Hamiltonian one can do the following: prepare a desired state, apply $U(t)$ or tensor products of $U(t)$ with identity to the state (i.e., query $U(t)$), and finally measure in a chosen basis. It is usual to impose the normalization condition $\norm{H}_{\infty}\leq 1$ (i.e., that the eigenvalues of $H$ are bounded in absolute value by $1$), as otherwise the complexities scale with the norm of the Hamiltonian. Given a distance $d$ in the space of Hamiltonians, $H$ is $\eps$-far from being $k$-local if for every $k$-local $H'$ it satisfies that $d(H,H')>\eps,$ and otherwise is $\eps$-close. Now we are ready to state the testing and learning problems. 

\begin{problem}[Tolerant Hamiltonian locality testing]\label{prob:localitytesting}
    Let $0\leq \eps_1<\eps_2$, $\delta\in (0,1)$, $k\in \mathbb N$ and $d$ be a distance between Hamiltonians. Let $H$ be an $n$-qubit Hamiltonian with $\norm{H}_{\infty}\leq 1$ that is promised to be either $\eps_1$-close to $k$-local or $\eps_2$-far from $k$-local. The problem is to decide between those cases with success probability $\geq 1-\delta$  by making queries to $U(t)$. 
\end{problem}

\begin{problem}[Local Hamiltonian learning]\label{prob:locallearning}
    Let $\eps>0$, $\delta\in (0,1)$, $k\in \mathbb N$ and $d$ be a distance between Hamiltonians. Let $H$ be a $k$-local $n$-qubit Hamiltonian with $\norm{H}_{\infty}\leq 1$. The problem is to output a classical description of a $k$-local Hamiltonian $H'$ that is $\eps$-close to $H$ with probability $\geq 1-\delta$  by making queries to $U(t)$.
\end{problem}

In both problems, the main goals are minimizing the number of queries and the total evolution time.  These quantities depend on the distance $d$. The recent learning literature usually takes $d$ to be the supremum distance of the Pauli spectrum ($d(H,H')=\norm{H-H'}_{\mathrm{Pauli},\infty}=\max_x |h_x-h'_x|$), and imposes extra constraints on the Pauli spectrum (generally some kind of geometrical locality, such as assuming that the qubits are displayed in a grid and the Pauli terms only act on neighboring qubits). However, if one wants to convert these learning algorithms to learners under a finer distance such as the 2-norm of the Pauli spectrum ($d(H,H´)=\norm{H-H'}_2=\sqrt{\sum_x |h_x-h'_x|^2}$, which equals the normalized Frobenius norm due to Parseval's identity) it is not clear how to avoid a $\poly(n^k)$ term appearing, due to the fact that there are $n^k$ Pauli strings that are $k$-local. This 2-norm of the Pauli spectrum was recently considered by Bluhm, Caro and Oufkir \cite{bluhm2024Hamiltonianv1}, who proved that to learn an arbitrary $n$-qubit Hamiltonian under this distance it is necessary to make $\Omega(2^{2n})$ queries to the time evolution operator. In the first version of their work, they also proposed a non-tolerant testing algorithm, meaning that it only works for the case $\eps_1=0,$ whose query complexity is $O(n^{2k+2}/(\eps_2-\eps_1)^4)$ and with total evolution time $O(n^{k+1}/(\eps_2-\eps_1)^3)$. They posed as open questions whether the dependence on $n$ could be removed and whether an efficient tolerant-tester was possible \cite[Section 1.5]{bluhm2024Hamiltonianv1}. Our first result gives positive answer to both questions. From now on, unless otherwise mentioned, we will consider the distance $d$ to be the one induced by the 2-norm of the Pauli spectrum. 

\begin{theorem}[Locality testing]\label{theo:localitytesting}
    There is an algorithm that solves the locality testing problem (\cref{prob:localitytesting}) by making $O(1/(\eps_2-\eps_1)^8\cdot\log(1/\delta))$ queries to the evolution operator and with $O(1/(\eps_2-\eps_1)^7\cdot\log(1/\delta))$ total evolution time.  
\end{theorem}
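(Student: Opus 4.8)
The plan is to reduce \cref{prob:localitytesting} to estimating a single real number---the $2$-norm distance from $H$ to the set of $k$-local Hamiltonians---and to estimate that number by classical sampling from a Bell measurement on the Choi state of $U(t)$ for a suitably small time $t$. Write $H=\sum_x h_x\sigma_x$ and let $|x|$ denote the number of non-identity tensor factors of $\sigma_x$. The $k$-local Hamiltonians form exactly the real subspace $\{h'_x=0 \text{ whenever } |x|>k\}$ of Pauli-coefficient space, and since $d$ is the Euclidean norm on these coefficients the nearest $k$-local Hamiltonian to $H$ is the weight-truncation $h'_x=h_x\mathbf 1[|x|\le k]$; hence $d(H,\{k\text{-local}\})=D:=\big(\sum_{x:\,|x|>k}h_x^2\big)^{1/2}$, and it suffices to estimate $D$ to additive error $(\eps_2-\eps_1)/4$ with probability $\ge 1-\delta$ and compare the estimate to $(\eps_1+\eps_2)/2$. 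To access $D$, fix $t>0$, prepare $n$ Bell pairs $|\Omega\rangle$, apply $U(t)$ to the $n$ system qubits (one from each pair), and measure all pairs in the Bell basis. Expanding $U(t)=\sum_x \widehat{U(t)}_x\,\sigma_x$, the pre-measurement state is $\sum_x \widehat{U(t)}_x\,(\sigma_x\otimes\Id)|\Omega\rangle$ in the orthonormal generalized Bell basis indexed by $x\in\{0,1,2,3\}^n$, so the measurement returns label $x$ with probability $|\widehat{U(t)}_x|^2$; in particular the probability $p(t)$ that the returned label has weight $>k$ equals $\sum_{|x|>k}|\widehat{U(t)}_x|^2$. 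Each sample costs exactly one query to $U(t)$.

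The one substantive step is a perturbative identity relating $p(t)$ to $D$. Taylor-expanding, $U(t)-\Id+itH=\sum_{j\ge 2}(-itH)^j/j!$, so using $\norm{H}_\infty\le 1$ and the general inequality $\norm{\cdot}_2\le\norm{\cdot}_\infty$ we get $\norm{U(t)-\Id+itH}_2\le \tfrac{t^2}{2}e^{t}=O(t^2)$, \emph{uniformly in $n$ and $k$}. Since the identity's only nonzero Pauli coefficient sits at $x=0$, this says that the vector $(\widehat{U(t)}_x)_{x\ne 0}$ lies within $\ell^2$-distance $O(t^2)$ of $(-ith_x)_{x\ne 0}$; restricting the coordinates to $|x|>k$ and using the triangle inequality in $\ell^2$ gives $\big|\sqrt{p(t)}-tD\big|\le O(t^2)$, i.e. $\big|\tfrac1t\sqrt{p(t)}-D\big|\le O(t)$. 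This is the only place the normalization $\norm{H}_\infty\le 1$ enters, and it is precisely what removes the $\poly(n^k)$ dimension factors appearing in the tester of \cite{bluhm2024Hamiltonianv1}.

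Now set $t=c(\eps_2-\eps_1)$ for a small absolute constant $c$, so that the bias above is at most $(\eps_2-\eps_1)/8$. Draw $N$ independent Bell samples and let $\hat p$ be the empirical frequency of weight-$>k$ outcomes; by Hoeffding's inequality $\Pr[\,|\hat p-p(t)|>\gamma\,]\le 2e^{-2N\gamma^2}$. Combining the elementary bound $|\sqrt a-\sqrt b|\le\sqrt{|a-b|}$ with the triangle inequality, on the event $|\hat p-p(t)|\le\gamma$ the estimator $\widehat D:=\tfrac1t\sqrt{\hat p}$ satisfies $|\widehat D-D|\le O(t)+\tfrac1t\sqrt\gamma$, which is at most $(\eps_2-\eps_1)/4$ once $\gamma=\Theta\big((\eps_2-\eps_1)^4\big)$. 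Taking $N=O\big(\log(1/\delta)/(\eps_2-\eps_1)^8\big)$ makes the good event hold with probability $\ge 1-\delta$, and the algorithm that outputs ``$\eps_1$-close'' iff $\widehat D\le(\eps_1+\eps_2)/2$ is correct. The query complexity is $N=O\big(\log(1/\delta)/(\eps_2-\eps_1)^8\big)$ and the total evolution time is $Nt=O\big(\log(1/\delta)/(\eps_2-\eps_1)^7\big)$, as claimed.

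The only place I expect genuine work is the perturbative estimate $\big|\sqrt{p(t)}-tD\big|\le O(t^2)$---in particular, checking that the error really is $O(t^2)$ with no hidden dependence on $n$ or $k$, which is what lets us avoid learning the (roughly $n^k$) low-weight Pauli coefficients one by one; the rest is bookkeeping, plus the routine verification that a Bell sample costs one query to $U(t)$ and that ``weight $>k$'' is read directly off the $n$ two-qubit outcomes. I note that the exponent $8$ is not essential: replacing Hoeffding by a relative-error Chernoff bound and exploiting that $p(t)=O(t^2)\le O\big((\eps_2-\eps_1)^2\big)$ is small would push it down towards $4$, at the cost of a slightly less simple analysis.
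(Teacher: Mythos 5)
Your proposal is correct and takes essentially the same approach as the paper: Bell-basis measurement on the Choi state of $U(t)$ at a short time $t=\Theta(\eps_2-\eps_1)$, a second-order Taylor bound $\norm{U(t)-\Id+itH}_2=O(t^2)$ to relate the high-weight Pauli mass of $U(t)$ to $\norm{H_{>k}}_2$, and Hoeffding with accuracy $\Theta((\eps_2-\eps_1)^4)$ to get $O(\log(1/\delta)/(\eps_2-\eps_1)^8)$ samples. The only cosmetic difference is that you phrase the analysis as constructing an additive-error estimator $\widehat D$ of the distance $D$ and thresholding at $(\eps_1+\eps_2)/2$, whereas the paper states the same bound as a dichotomy on $\norm{U(\alpha)_{>k}}_2$ for the two promise cases (its Claim~\ref{theo:testingdicotomy}).
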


Our algorithm to test for locality is simple. It consists of repeating the following process $1/(\eps_2-\eps_1)^8$ times: prepare $n$ EPR pairs, apply $U(\eps_2-\eps_1)\otimes \Id_{2^n}$ to them and measure in the Bell basis. Each time that we repeat this process, we sample from the  Pauli sprectrum of $U(\eps_2-\eps_1)$\footnote{The Pauli spectrum of a unitary $U=\sum_x u_x\sigma_x$ determines a probability distribution because $\sum_x |u_x|^2=1.$}. As $\eps_2-\eps_1$ is  small, Taylor expansion ensures that $U(\eps_2-\eps_1)\approx \Id_{2^n}-i(\eps_2-\eps_1) H$, so sampling from the Pauli spectrum of $U(\eps_2-\eps_1)$  allows us to estimate the weight of the non-local terms of $H.$ If that weight is big, we output that $H$ is far from $k$-local, and otherwise we conclude that $H$ is close to $k$-local.

After sharing \cref{theo:localitytesting} with Bluhm, Caro and Oufkir, they independently improved the analysis of their testing algorithm. They showed that in a certain broad regime their tester is tolerant, only makes $O(1/((\eps_2-\eps_1)^3\eps_2))$ queries and just requires $O(1/((\eps_2-\eps_1)^{2.5}\eps_2^{0.5}))$ total evolution time \cite[Theorem B.5]{bluhm2024hamiltonianv2}. To compare \cref{theo:localitytesting} with the improved testing result of Bluhm, Caro and Oufkir, we should introduce a family of problems that has locality testing as an instance. Given a subset $S$ of $\{0,1,2,3\}^n$, we define the problem of testing property $S$ as the problem of testing whether $H$ is $\eps_1$-close to be supported on $S$ or $\eps_2$-far from being supported on $S$. By taking $S$ equal to the set of strings that take the value $0$ on at least $n-k$ sites one recovers the $k$-locality testing problem. Both our \cref{theo:localitytesting} and their testing result work for testing properties defined by sets $S$. The advantages of their algorithm are that it uses no auxiliary qubits, while ours requires $n$, and  it is quadratically better than ours with respect to $\eps_2-\eps_1$. The advantages of our algorithm are that it works for any $S$, while theirs only works for $|S|=O(2^{n}(\eps_2^2-\eps_1^2)^{2})$, although they can remove this constraint by allowing access to $O(\log(|S|^3/2^n)+\log(1/(\eps_2-\eps_1)))$ auxiliary qubits; that our algorithm just needs $O(1/(\eps_2-\eps_1)^8)$ classical post-processing time, while theirs requires $O(n^2|S|/((\eps_2-\eps_1)^3\eps_2))$ time; and that theirs needs $H$ to be traceless, while ours does not. A technical feature of our result is that the proof is simpler and more concise than theirs.

Our second result is a learning algorithm for $k$-local Hamiltonians. 

\begin{theorem}[Local Hamiltonian learning]\label{theo:locallearning}
    There is an algorithm that solves the local Hamiltonian learning problem (\cref{prob:locallearning})  by making $\exp(O(k^2+k\log (1/\eps))\log(1/\delta)$ queries to the evolution operator with $\exp(O(k^2+k\log (1/\eps))\log(1/\delta)$  total evolution time. 
\end{theorem}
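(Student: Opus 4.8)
The plan is to reuse the EPR-sampling primitive of \cref{theo:localitytesting} to locate the heavy Pauli coefficients of $H$, and then a short ``local tomography'' phase to pin down their values and signs. We may assume $H$ is traceless (the identity coefficient $h_{0^n}=2^{-n}\Tr H$ satisfies $|h_{0^n}|\le1$ and is learned directly), and that $\|H\|_2^2>\eps^2/2$, since otherwise the zero Hamiltonian already solves the problem; whether this holds can be checked from the trivial-outcome probability below. First I would analyse the primitive: preparing $n$ EPR pairs, applying $U(t)\otimes\Id$ and measuring in the Bell basis returns the Pauli string $x$ with probability $|u_x(t)|^2$, where $U(t)=\sum_x u_x(t)\sigma_x$. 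Writing $e^{-iHt}=\Id-itH+R(t)$ with $R(t)=\sum_{m\ge2}(-itH)^m/m!$, the normalisation $\|H\|_\infty\le1$ gives $\|R(t)\|_2\le e^{t}-1-t\le t^2$ for $t\le1$, so $u_x(t)=-ith_x+r_x(t)$ with $\sum_x|r_x(t)|^2\le t^4$. Expanding $|u_x(t)|^2$ and applying Cauchy--Schwarz, the output distribution conditioned on $x\ne0^n$ is within total variation $O(t/\|H\|_2^2)$ of the distribution $q$ with $q_x\propto h_x^2$, and the event $x\ne0^n$ has probability $\Theta(t^2\|H\|_2^2)$; in particular $\|H\|_2^2$ can be read off from $\Pr[x=0^n]=1-t^2\|H\|_2^2+O(t^3)$.

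Next, fix a threshold $\tau=\tau(k,\eps)$, take $t$ small enough (polynomially in $\tau$ and $\eps$) that the conditional distribution above is within total variation $\tau/4$ of $q$, and draw $M=\Theta(\tau^{-1}\log(\tau^{-1}\delta^{-1}))$ samples with $x\ne0^n$; let $L$ be the set of distinct outcomes. Every string with $h_x^2\ge\tau$ has $q_x\ge\tau$, hence conditional sampling probability $\ge\tau/2$, so a union bound gives $L\supseteq\{x:\,h_x^2\ge\tau\}$ with probability $\ge1-\delta/2$, and $|L|\le1/\tau$ by Parseval. The content-bearing step is the structural estimate: \emph{if $H$ is $k$-local with $\|H\|_\infty\le1$ then $\sum_{x:\,h_x^2<\tau}h_x^2\le\eps^2/2$ already for $\tau=\exp(-O(k^2))\,\eps^{O(k)}$} --- equivalently, all but $\eps^2/2$ of the Frobenius weight of a $k$-local bounded Hamiltonian sits on $\exp(O(k^2+k\log(1/\eps)))$ Pauli strings, independently of $n$. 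I would prove this Pauli-analytically: the small-coefficient part $\sum_{h_x^2<\tau}h_x\sigma_x$ is again $k$-local, and quantum hypercontractivity / a level-$k$ inequality for the Pauli spectrum forces its weight to be small --- this is precisely where boundedness and $k$-locality are used jointly, and where the $n$-independence comes from.

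It remains to recover, for each support $S=\supp(x)$ with $|S|\le k$ occurring in $L$, the coefficients $\{h_x:\supp(x)\subseteq S\}$ with their signs. For each ordered pair of Paulis $(P,Q)$ on the qubits of $S$, I would prepare the stabiliser state $\rho_P=\tfrac1{2^{|S|}}(\Id_S+P)\otimes\tfrac{\Id}{2^{n-|S|}}$, apply $U(t)$, and measure $Q$: to first order $\Tr\!\bigl(Q\,U(t)\rho_PU(t)^{\dagger}\bigr)=\Tr(Q\rho_P)-it\,\Tr\!\bigl([Q,H]\rho_P\bigr)+O(t^2)$, and since $\rho_P$ is maximally mixed off $S$ the linear term is an explicit real linear form in $\{h_x:\supp(x)\subseteq S\}$. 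Estimating these $O(16^k)$ responses to precision $\approx t\,\eps\sqrt\tau$ (which costs $\poly(1/\eps)$ repetitions each, with an extra $\log(1/\delta)$ for boosting) and inverting the linear system yields every such $h_x$ to accuracy $\eps\sqrt\tau$, signs included. Outputting $H'=\sum_{x\in L}\tilde h_x\sigma_x$ then gives $\|H-H'\|_2^2\le\sum_{x\notin L}h_x^2+\sum_{x\in L}|h_x-\tilde h_x|^2\le\eps^2/2+|L|\cdot\eps^2\tau\le\eps^2$, and the total number of queries is $M\cdot\poly(1/\eps)+|L|\cdot16^k\cdot\poly(1/\eps)=\exp(O(k^2+k\log(1/\eps)))\log(1/\delta)$; since each query uses evolution time $t\le1$, the total evolution time obeys the same bound.

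The main obstacle is the structural estimate of the second step: showing that a $k$-local bounded Hamiltonian has $\eps^2$-effective Pauli support of size $\exp(O(k^2+k\log(1/\eps)))$, with no dependence on $n$. Everything else --- the $\ell_2$ Taylor bound on $R(t)$, the conditional-sampling analysis, and the first-order linear systems --- is routine once that estimate is available; a minor secondary point is to verify that these linear systems are well-conditioned, with condition number bounded by a function of $k$ alone, so that the target precision $\eps\sqrt\tau$ on the $h_x$'s translates into an affordable precision on the measured responses.
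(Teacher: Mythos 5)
Your overall skeleton matches the paper's: sample the Pauli distribution of $U(t)$ via the Choi state to identify a small set $L$ of potentially-heavy Pauli coefficients, then estimate those coefficients directly, and argue that the residual weight outside $L$ is $O(\eps^2)$. Two differences are worth noting. First, for the coefficient-recovery phase the paper simply invokes the SWAP-test primitive of Montanaro and Osborne on the Choi state, which directly returns $\mathrm{Re}(u_x)$ and $\mathrm{Im}(u_x)$ (and hence $h_x$, sign included) to precision $\beta$ with $O(\beta^{-2})$ queries; your alternative of preparing stabilizer states $\rho_P$ and inverting a local linear system is workable but strictly more involved and raises the conditioning question you yourself flag, which the paper simply avoids. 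Second, and this is the crucial point, the paper estimates $|u_x|^2$ pointwise via the Dvoretzky--Kiefer--Wolfowitz-type bound of Canonne, rather than conditioning on $x\neq 0^n$; this sidesteps the $\|H\|_2$-dependent total-variation bookkeeping in your first paragraph and removes the need to separately assume $\|H\|_2^2>\eps^2/2$.

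The genuine gap is exactly the one you identify as ``the main obstacle'': the structural estimate that a $k$-local Hamiltonian with $\|H\|_\infty\le1$ places all but $\eps^2/2$ of its Pauli weight on coefficients of magnitude at least $\sqrt{\tau}$ with $\tau=\exp(-O(k^2+k\log(1/\eps)))$. You gesture at ``quantum hypercontractivity / a level-$k$ inequality,'' but neither of these is the right tool, and a hypercontractivity-based argument would not cleanly deliver the $C^k$-type constant needed here. What the paper actually uses is the \emph{non-commutative Bohnenblust--Hille inequality} of Volberg and Zhang (building on Huang--Chen--Preskill), which states that $\sum_x |h_x|^{2k/(k+1)}\le C^k$ for $k$-local $H$ with $\|H\|_\infty\le1$. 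From this the tail bound is immediate:
\begin{equation*}
    \sum_{|h_x|<\sqrt{\tau}} h_x^2 \;=\; \sum_{|h_x|<\sqrt{\tau}} |h_x|^{2/(k+1)}\,|h_x|^{2k/(k+1)} \;\le\; \tau^{1/(k+1)}\,C^k,
\end{equation*}
which is at most $\eps^2/2$ once $\tau\le(\eps^2/(2C^k))^{k+1}=\exp(-O(k^2+k\log(1/\eps)))$. So your target threshold is exactly right, but the inequality that produces it is a nontrivial imported result, not something that falls out of standard quantum Fourier analysis; without citing or reproving it, the proof does not close. Everything else in your plan (the $\ell_2$ Taylor remainder bound, the union bound over $L$, the final error accounting and query count) is sound once this lemma is in hand.
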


The learning algorithm of \cref{theo:locallearning} has two stages. In the first stage one samples from the Pauli distribution of $U(\eps)$, as in the testing algorithm, and from that one can detect which are the big Pauli coefficients of $H$. In the second stage we learn those big Pauli coefficients using the SWAP test on $U(\eps)$, as proposed by Montanaro and Osborne \cite[Lemma 24]{montanaro2008quantum}. One can ensure that the coefficients not detected as big in the first stage of the algorithm can be neglected. To do that we borrow the ideas of Eskenazis and Ivanisvili for the classical low-degree learning problem \cite{eskenazis2022learning}, combined with the non-commutative Bohnenblust-Hille inequality proved by Huang, Chen and Preskill \cite{huang2023learning}, and improved by Volberg and Zhang \cite{volberg2023noncommutative}. 

Most  previous work on Hamiltonian learning is done under the distance induced by the supremum norm of the Pauli spectrum and with extra constraints apart from locality \cite{Silva2011Practical, Bairey2019Learning, Zubida2021Optimal, haah2022optimal, wilde2022scalably, yu2023robust, caro2023learning, Dutkiewicz.2023, huang2023heisenberg, li2023heisenberglimited, möbus2023dissipationenabled, franca2024efficient, Gu2022Practical}. When transformed into learning algorithms under the finer distance induced by the  2-norm of the Pauli spectrum, these proposals yield complexities that depend polynomially on $n^k$ and only work for a restricted family of $k$-local Hamiltonians. A work that explicitly considers the problem of learning under the 2-norm of the Pauli spectrum is the one of Castaneda and Wiebe \cite{castaneda2023Hamiltonian}. However, their results require query access to the inverse time evolution operator $e^{-itH}$ and yield complexities of order $O(n^k)$. Hence, to the best of our knowledge, \cref{theo:locallearning} is the first learning result that has no dependence on $n$ when considering the 2-norm of Pauli spectrum and works for any kind of $k$-local Hamiltonian. Regarding the tightness of its query complexity, the $\exp(\Omega(k))$ query lower bound of Bluhm, Caro and Oufkir \cite{bluhm2024Hamiltonianv1} to learn $k$-qubit Hamiltonians shows that our algorithm cannot be improved much. Closing the gap between this lower bound and our $\exp(O(k^2))$ query upper bound remains an intriguing open problem. 

\section{Preliminaries}
In this section we collect a few well-known facts that we will repeatedly use in our proofs. Given an $n$-qubit operator $A=\sum_{x}a_x\sigma_x$, Parseval's identity states that its normalized Frobenius norm equals the $2$-norm of its Pauli spectrum. We will denote both by $\norm{A}_2$,
$$\norm{A}_2=\sqrt{\frac{\Tr [A^\dagger A]}{2^n}}=\sqrt{\sum_{x\in\{0,1,2,3\}^2}|a_x|^2}.$$
Given $x\in\{0,1,2,3\}^n$, we define $|x|$ as the number of sites where $x$ does not take the value 0, $A_{>k}$ as $\sum_{|x|>k}a_x\sigma_x$ and $A_{\leq k}$ as $\sum_{|x|\leq k}a_x\sigma_x$. From the formulation of the 2-norm in terms of the Pauli coefficients it follows that $\norm{A_{>k}}_2\leq \norm{A}_2$, while from its formulation as the normalized Frobenius norm one has that $\norm{A}_2\leq \norm{A}_\infty.$  We recall that $\norm{A}_{\infty}$ is the biggest singular value of $A$.
We note that the distance of a Hamiltonian $H$ from the space of $k$-local Hamiltonians is given by $\norm{H_{>k}}_2$, as $H_{\leq k}$ is the  $k$-local Hamiltonian closest to $H$. 

It follows from Parseval's identity that if $U$ is a unitary, then $\sum|u_x|^2=1.$ In other words, $(|u_x|^2)_x$ is a  probability distribution. Applying $U\otimes \Id_{2^n}$ to $n$ EPR pairs (i.e., preparing the Choi-Jamiolkowski state of $U$) and measuring in the Bell basis allows one to sample from this distribution, because $$U\otimes\Id_{2^n}\ket{\mathrm{EPR}_n}=\sum_{x\in\{0,1,2,3\}^n}u_x \otimes_{i\in [n]}(\sigma_{x_i}\otimes \Id_2\ket{\mathrm{EPR}}),$$
and the Bell states can be written as $\sigma_x\otimes\Id_{2}\ket{\mathrm{EPR}}$ for $x\in\{0,1,2,3\}$.

We will also use that given a Hamiltonian $H$ with $\norm{H}_\infty\leq 1$, the Taylor expansion of the exponential allows us to approximate the time evolution operator as 
$$ 
U(t)=e^{-itH}=\Id_{2^n}-itH+ct^2 R_2(t)
$$
for $t\leq 1/2,$ where the second order remainder $R_2(t)$ is bounded $\norm{R_2(t)}_\infty\leq 1$ and $c>0$ is a universal constant. 

\section{Testing locality}
In this section we prove \cref{theo:localitytesting}. First, we prove a claim regarding the discrepancy on the weights of non-local terms of the short-time evolution operator for close-to-local and far-from-local Hamiltonians. 
\begin{claim}\label{theo:testingdicotomy}
    Let $0\leq \eps_1<\eps_2$. Let $\alpha=(\eps_2-\eps_1)/(3c)$ and $H$ be an $n$-qubit Hamiltonian with $\norm{H}_{\infty}\leq 1$. We have that if $H$ is $\eps_1$-close $k$-local, then $$\norm{U(\alpha)_{>k}}_2\leq (\eps_2-\eps_1)\frac{2\eps_1+\eps_2}{9c},$$ and if $H$ is $\eps_2$-far from being $k$-local, then  $$\norm{U(\alpha)_{>k}}_2\geq (\eps_2-\eps_1)\frac{\eps_1+2\eps_2}{9c}.$$
\end{claim}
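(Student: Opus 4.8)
The plan is to substitute the first-order Taylor expansion of $U(\alpha)$ from the preliminaries into the high-weight truncation and read off both bounds from a single estimate; the value $\alpha=(\eps_2-\eps_1)/(3c)$ is chosen precisely so that the arithmetic produces the stated constants.

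First I would record that $\alpha\le 1/2$ (one may take $c\ge 1$, and $\eps_2\le 1$ since $\norm{H_{>k}}_2\le\norm{H}_\infty\le 1$ makes the far case vacuous otherwise), so the preliminaries apply and $U(\alpha)=\Id_{2^n}-i\alpha H+c\alpha^2 R_2(\alpha)$ with $\norm{R_2(\alpha)}_\infty\le 1$. Since the identity is supported on the weight-$0$ Pauli string we have $(\Id_{2^n})_{>k}=0$, and truncation to high-weight Pauli strings is linear, so
$$U(\alpha)_{>k}=-i\alpha\,H_{>k}+c\alpha^2\,(R_2(\alpha))_{>k}.$$
Next, $\norm{(R_2(\alpha))_{>k}}_2\le\norm{R_2(\alpha)}_2\le\norm{R_2(\alpha)}_\infty\le 1$, using that dropping Pauli coefficients cannot increase the $2$-norm and that the $2$-norm is dominated by the operator norm. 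The triangle inequality, applied in both directions, then gives
$$\alpha\norm{H_{>k}}_2-c\alpha^2\;\le\;\norm{U(\alpha)_{>k}}_2\;\le\;\alpha\norm{H_{>k}}_2+c\alpha^2.$$

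It remains to insert the two hypotheses, recalling from the preliminaries that the distance of $H$ from the $k$-local Hamiltonians equals $\norm{H_{>k}}_2$, and that $c\alpha=(\eps_2-\eps_1)/3$. If $H$ is $\eps_1$-close to $k$-local then $\norm{H_{>k}}_2\le\eps_1$, so
$$\norm{U(\alpha)_{>k}}_2\le\alpha(\eps_1+c\alpha)=\frac{\eps_2-\eps_1}{3c}\Big(\eps_1+\frac{\eps_2-\eps_1}{3}\Big)=(\eps_2-\eps_1)\frac{2\eps_1+\eps_2}{9c}.$$
If $H$ is $\eps_2$-far from $k$-local then $\norm{H_{>k}}_2>\eps_2$, so
$$\norm{U(\alpha)_{>k}}_2\ge\alpha(\eps_2-c\alpha)=\frac{\eps_2-\eps_1}{3c}\Big(\eps_2-\frac{\eps_2-\eps_1}{3}\Big)=(\eps_2-\eps_1)\frac{\eps_1+2\eps_2}{9c},$$
which is exactly the claim. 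There is no genuine obstacle here: the only points that need a word of care are verifying that $\alpha$ lies in the range where the quadratic Taylor bound is valid, and controlling the remainder in the $2$-norm rather than merely the operator norm, both handled by the preliminary facts. It is worth noting that the resulting gap between the two bounds is $(\eps_2-\eps_1)^2/(9c)$, and that the choice of $\alpha$ is tuned to make this gap as favorable as possible relative to the per-query evolution time $\alpha$, which is what the subsequent sampling argument exploits.
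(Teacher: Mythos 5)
Your proof is correct and follows essentially the same route as the paper's: Taylor-expand $U(\alpha)$, truncate to weight $>k$, bound the remainder in the $2$-norm via the operator norm, apply the triangle inequality in both directions, and substitute $\norm{H_{>k}}_2\le\eps_1$ or $\ge\eps_2$. The only difference is cosmetic — you make the intermediate identity $U(\alpha)_{>k}=-i\alpha H_{>k}+c\alpha^2(R_2(\alpha))_{>k}$ and the range check on $\alpha$ explicit, which the paper leaves implicit.
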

\begin{proof}To save on notation, we set $U=U(\alpha)$ and $R=R_2(\alpha)$. First, assume that $H$ is $\eps_1$-close $k$-local. Then  $$\norm{U_{>k}}_2\leq \alpha\norm{H_{>k}}_2+ c\alpha^2\norm{R_{>k}}_2\leq \frac{\eps_2-\eps_1}{3c}\eps_1+c\left(\frac{\eps_2-\eps_1}{3c}\right)^2 =(\eps_2-\eps_1)\frac{2\eps_1+\eps_2}{9c},$$
    where in the first inequality we have used the triangle inequality and the Taylor expansion, and in the second that $H$ is $\eps_1$-close to $k$-local and that $\norm{R_{>k}}_2\leq \norm{R}_2\leq 1$ because $\norm{R}_2\leq \norm{R}_\infty\leq 1.$ Now, assume that $H$ is $\eps_2$-far from being $k$-local. Then $$\norm{U_{>k}}_2\geq \alpha\norm{H_{>k}}_2-c\alpha^2\norm{R_{>k}}_2\geq \frac{\eps_2-\eps_1}{3c}\eps_2-c\left(\frac{\eps_2-\eps_1}{3c}\right)^2 \geq (\eps_2-\eps_1)\frac{\eps_1+2\eps_2}{9c},$$
    where in first inequality we have used again Taylor expansion and the triangle inequality, and in the second the fact that being $\eps_2$-far from $k$-local implies that $\norm{H_{>k}}_2\geq \eps_2.$
\end{proof}

\begin{proof}[ of  \cref{theo:localitytesting}]
    Applying $U(\alpha)\otimes \Id_{2^n}$ to $\ket{\mathrm{EPR}_n}$ and measuring in the Bell basis allows one to sample from $(|U(\alpha)_x|^2)_{x}$. Thus, by the Hoeffding bound, with $O(1/(\eps_2-\eps_1)^8\cdot \log(1/\delta))$ queries to $U(\alpha)\otimes\Id_{2^n}$  one can estimate $\sum_{|x|>k}|U(\alpha)_x|^2$ up to an error $((\eps_2-\eps_1)^2/(18c))^2$ with success probability $1-\delta$. Taking $\alpha=(\eps_2-\eps_1)/(3c)$, thanks to \cref{theo:testingdicotomy}, this is enough for testing $k$-locality.
\end{proof}

\begin{remark}
    The algorithm of \cref{theo:localitytesting} also works to test any property defined by a set of Pauli strings $S\subseteq \{0,1,2,3\}^n$, i.e., to test whether $\sqrt{\sum_{x\notin S}h_x^2}\leq \eps_1$  or $\sqrt{\sum_{x\notin S}h_x^2}\geq \eps_2$. Also, a union bound allows us to simultaneously test $M$ properties defined by sets $S_1,...,S_M$ by  paying a factor of $\log M$ in the query complexity and total evolution time. 
\end{remark}

\section{Learning local Hamiltonians}
In this section we prove \cref{theo:locallearning}. To do that, we need the non-commutative Bohnenblust-Hille inequality by Volberg and Zhang \cite{volberg2023noncommutative}, which they used to give an algorithm to learn local observables. 

\begin{theorem}[Non-Commutative Bohnenblust-Hille inequality]\label{theo:NCBH}
    Let $H=\sum_x h_x\sigma_x$ be a $k$-local Hamiltonian with $\norm{H}_{\infty}\leq 1$. Then, there is a universal constant $C$ such that $$\sum_{x\in\{0,1,2,3\}^n}|h_x|^{\frac{2k}{k+1}}\leq C^{k}.$$
\end{theorem}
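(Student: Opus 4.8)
The plan is to prove \cref{theo:NCBH} by induction on the locality $k$, in the spirit of the classical Bohnenblust--Hille inequality. Set $p_k=\tfrac{2k}{k+1}$ and let $A_k$ be the smallest constant for which $\big(\sum_x|h_x|^{p_k}\big)^{1/p_k}\le A_k$ holds for every $k$-local $n$-qubit Hamiltonian $H$ (all $n$) with $\norm{H}_\infty\le 1$; the claim is equivalent to $A_k\le C^k$ for a universal $C$, up to the harmless exponent $1/p_k<1$. The base case $A_1=O(1)$ is elementary: a $1$-local Hamiltonian is $H=h_0\Id+\sum_j A_j$ with $A_j=\sum_{a\in\{1,2,3\}}h_{j,a}\,\sigma_a^{(j)}$ a single-site term on qubit $j$, and since the $A_j$ commute one has $\norm{H}_\infty=|h_0|+\sum_j\sqrt{h_{j,1}^2+h_{j,2}^2+h_{j,3}^2}$, so Cauchy--Schwarz gives $|h_0|+\sum_{j,a}|h_{j,a}|\le(1+\sqrt 3)\norm{H}_\infty$. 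Everything then rests on a recursion $A_k\le\kappa\,A_{k-1}$ with $\kappa$ \emph{independent of $k$}, which iterates to $A_k\le\kappa^{\,k-1}A_1=C^k$.

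For the recursive step I would use \emph{Pauli slices}: for a qubit $j$ and $a\in\{0,1,2,3\}$, let $H_{j,a}$ be the operator on the remaining $n-1$ qubits determined by $H=\sum_{a}\sigma_a^{(j)}\otimes H_{j,a}$, equivalently $H_{j,a}=\tfrac12\Tr_j\big[(\sigma_a^{(j)}\otimes\Id)H\big]$. Two features are crucial. First, for $a\in\{1,2,3\}$ the slice $H_{j,a}$ is $(k-1)$-local, having lost the Pauli weight at site $j$, whereas $H_{j,0}$ is still $k$-local. Second, every slice is an operator-norm contraction, $\norm{H_{j,a}}_\infty\le\norm{H}_\infty$: the projection onto the $\sigma_a^{(j)}$-isotypic component is a Pauli twirl, $\sigma_a^{(j)}\otimes H_{j,a}=\tfrac14\sum_{b\in\{0,1,2,3\}}\chi_{a}(b)\,\sigma_b^{(j)}H\sigma_b^{(j)}$ where $\chi_a(b)=\pm1$ records whether $\sigma_a,\sigma_b$ commute, so the triangle inequality over the four unitary conjugates gives the bound. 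Parseval in the $x_j$-coordinate, $\sum_{a}\norm{H_{j,a}}_2^2=\norm{H}_2^2$, will also be used.

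The analytic core is a noncommutative Blei-type interpolation inequality. After the routine reduction to the case where $H$ is supported on Pauli strings of weight exactly $k$ (the weight-$<k$ parts are governed by the smaller-locality cases and cost only an overall $C^k$ factor) and polarization of the weight-$k$ part into its associated symmetric $k$-linear coefficient tensor, one bounds the target $\ell^{p_k}$-norm by a geometric mean over the $k$ active positions of mixed norms that are $\ell^2$ in one distinguished position and $\ell^{p_{k-1}}$ in the remaining $k-1$ positions --- the exponents being forced by $\tfrac1{p_k}=\tfrac1k\big(\tfrac12+(k-1)\tfrac1{p_{k-1}}\big)$. Within each such mixed norm, the inner $\ell^{p_{k-1}}$-block over the non-distinguished qubits is exactly $\norm{\widehat{H_{j,a}}}_{p_{k-1}}^{p_{k-1}}$ for the relevant slice, and for $a\in\{1,2,3\}$ the induction hypothesis bounds this by $A_{k-1}^{p_{k-1}}\norm{H_{j,a}}_\infty^{p_{k-1}}$. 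The remaining outer $\ell^2$-block over the distinguished qubit, together with the $a=0$ contribution, must then be folded back into $O(1)\cdot\norm{H}_\infty$ with a constant not growing with $k$; this is where noncommutative hypercontractivity for the qubit Pauli (depolarizing) semigroup and the noncommutative Khintchine inequality enter, playing the exact role that Rademacher averaging and the optimal $2\to q$ hypercontractive estimate play in the commutative proof.

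The main obstacle is precisely this last bookkeeping: \emph{keeping the per-step factor $\kappa$ bounded independently of $k$}. Any crude handling --- summing slice norms in $\ell^1$ over the qubits, or losing a $\poly(k)$ factor from the interpolation or from a non-optimal hypercontractive constant at each level --- degrades the bound from $C^k$ to $k^{\Theta(k)}$ and destroys the theorem. What makes the recursion close is that the qubit Pauli semigroup satisfies the \emph{optimal} hypercontractive inequality, with the same constants as the classical two-point space, so that a single use of the $(k-1)$-case incurs only a universal multiplicative loss. Establishing this sharp noncommutative hypercontractivity, and the accompanying noncommutative Blei inequality, and then deploying them correctly, is the substantive content; it is the ingredient introduced by Huang, Chen and Preskill and sharpened by Volberg and Zhang, and once it is available the induction skeleton, the slice contraction bounds, and the base case are all routine.
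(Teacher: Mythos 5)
This theorem is not proved in the paper at all: it is stated and imported from Volberg and Zhang (improving Huang, Chen and Preskill), so there is no in-paper argument to compare your write-up against. Taken on its own terms, what you have is a skeleton, not a proof. The parts you carry out are fine: the twirl formula $\sigma_a^{(j)}\otimes H_{j,a}=\tfrac14\sum_b\chi_a(b)\,\sigma_b^{(j)}H\sigma_b^{(j)}$ and the resulting contraction $\norm{H_{j,a}}_\infty\le\norm{H}_\infty$; the observation that the $a\in\{1,2,3\}$ slices drop the locality by one; and the base case $A_1=O(1)$ via simultaneous diagonalization of commuting single-site terms. But the entire analytic core --- the noncommutative Blei-type interpolation and the ``optimal hypercontractive inequality with the same constants as the classical two-point space'' for the qubit depolarizing semigroup --- is invoked rather than established, and you say yourself that this is ``the substantive content'' to be supplied by the cited works. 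That is precisely the gap: without it the recursion $A_k\le\kappa\,A_{k-1}$ is not closed, and the induction has no content.

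Two further cautions. First, the hypercontractivity claim as you phrase it is not a shelf result: sharp $2\to q$ hypercontractivity for the qubit Pauli (depolarizing) semigroup in Schatten norms is a delicate matter, and the accompanying ``Blei inequality'' for Pauli coefficient tensors is a genuinely noncommutative statement, not a transcription of the scalar lemma. Second, the published proofs do not in fact run the qubit-by-qubit Blei induction you outline; Volberg and Zhang obtain the inequality through a reduction that recasts the Pauli expansion so that the scalar Bohnenblust--Hille inequality can be brought to bear, rather than by iterating quantum hypercontractivity over sites. So the sketch correctly identifies where a proof would have to work, and several of the small lemmas you record are sound, but it does not do the work, and the route it proposes is not the one the theorem's authors used.
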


\begin{proof}[ of \cref{theo:locallearning}]
    Let $\alpha,\beta,\gamma$  be fixed later. Let $U$ be the evolution operator $U(\alpha)$ at time $\alpha$ and $R$ be the Taylor remainder $R_2(\alpha)$. Our learning algorithm has two stages, the first to detect the big Pauli coefficients and the second to learn those big Pauli coefficients.
    
    \textbf{First stage of the algorithm: detect the big Pauli coefficients.} In this stage we prepare $U\otimes \Id_{2^n}\ket{\mathrm{EPR}_n}$ and measure in the Bell basis $O(\gamma^4\log(1/\delta))$ times. This way  we  sample $O(\gamma^4\log(1/\delta))$ times from $(|u_x|^2)_x$. By \cite[Theorem 9]{canonne2020short} the empirical distribution $|u_x'|^2$ obtained from these samples approximates $|u_x|^2$ up to error $\gamma^2$ for every $x\in\{0,1,2,3\}^n$ with success probability $\geq 1-\delta$. Let $S_{\gamma}=\{x: | u_x'|>\gamma\}\setminus\{0^n\}$ be the set of big Pauli coefficients. Note that if $x$ does not belong to $S_{\gamma}$, then $u_x$ is indeed small since 
    \begin{equation}\label{eq:estimate1}
        x\notin S_\gamma \land x\neq 0 \implies | u_x|\leq | u_x'|+| u_x- u_x'|\leq 2\gamma.
    \end{equation}
    Also note that $u_0=1-i\alpha h_0+c\alpha^2 R_{0}$ and that $ u_x=-i\alpha h_x+c\alpha^2 R_{x}$ for every $x\neq 0^n$. As $\norm{R}_{\infty}\leq 1$, we have that  $\norm{c\alpha^2R}_\infty^2\leq c^2\alpha^4$. As $\norm{R}_{2}\leq \norm{R}_{\infty}$, this implies that  
    \begin{equation}\label{eq:estimate4}
        |(u_0-1)+i\alpha h_0|^2+\sum_{x\neq 0}| u_x+i\alpha h_x|^2\leq c^2\alpha^4
    \end{equation}
    and in particular for every $x\neq 0$
    \begin{equation}\label{eq:estimate2}
         |u_x+i\alpha h_x|\leq c\alpha^2.
    \end{equation}
   Putting \cref{eq:estimate1,eq:estimate2} together it follows that
     \begin{equation}\label{eq:estimate3}
        x\notin S_\gamma \land x\neq 0 \implies | h_x|\leq \alpha^{-1}[| u_x|+| u_x+i\alpha h_x|]\leq  \alpha^{-1}(2\gamma+c\alpha^2).
    \end{equation}
    Moreover, as $\sum_{x}| u_x'|^2=1,$ we have that $|S_\gamma|\leq \gamma^{-2}.$ 
    
    \textbf{Second stage of the algorithm: learn the big Pauli coefficients.} Montanaro and Osborne proposed a simple primitive to estimate a given Pauli coefficient of a unitary up to error $\beta$ with success probability $\geq 1-\delta$ by making  $O((1/\beta)^2\log(1/\delta))$ queries \cite[Lemma 24]{montanaro2008quantum}. 
    We use this primitive to learn the Pauli coefficients of $u_x$ up to error $\beta$ with success probability $\geq 1-\delta$ for every $x\in S_\gamma\cup\{0^n\}$. As,  $|S_\gamma|\leq \gamma^{-2}$, by a union bound, this stage requires $O(\beta^{-2}\gamma^{-2}\log(1/(\gamma^2\delta)))$ queries to $U$. Let $u''_x$ be these estimates. We output $H''=\mathrm{Re}(i\alpha^{-1}(u''_{0}-1))\sigma_0+\sum_{x\in S_\gamma} \mathrm{Re}(i\alpha^{-1}u_x'')\sigma_x$ as our approximation of $H$ (we take the real part to ensure that $H''$ is self-adjoint). 

    \textbf{Correctness of the algorithm.} We claim that with an appropriate choice of the parameters $H''$ is a good approximation of $H$. First, as the Pauli coefficients of $H$ are real we have that 
    \begin{align*}
        \norm{H-H''}_2^2&=\alpha^{-2}|\mathrm{Re}(i(u''_{0}-1))-\alpha h_0|^2+\alpha^{-2}\sum_{x\in S_\gamma}| \mathrm{Re}(iu_x'')-\alpha h_x|^2+\sum_{x\notin S_\gamma}|h_x|^2\\
        &\leq\underbrace{\alpha^{-2}|(u_0''-1)+i\alpha h_0|^2+\alpha^{-2}\sum_{x\in S_\gamma}| u_x''+i\alpha h_x|^2}_{\mathrm{(I})}+\underbrace{\sum_{x\notin S_\gamma}|h_x|^2}_{\mathrm{(II})}.
    \end{align*}
    Second, we give an upper bound to term $($I$)$: 
    \begin{align*}
        (\mathrm{I})&\leq 2\alpha^{-2}[|(u_0-1)+i\alpha h_0|^2+|u_{0}''-u_{0}|^2]+2\alpha^{-2}\sum_{x\in S_\gamma}[| u_x+i\alpha h_x|^2+| u_x''-u_x|^2]\\
        &\leq 2c^2\alpha^2+2\alpha^{-2}\beta^2(\gamma^{-2}+1)
    \end{align*}
    where in the first step we have used the triangle inequality and that $(a+b)^2\leq 2(a^2+b^2)$; and in the second step \cref{eq:estimate4} to upper bound the error due to approximating $h_x$ from $u_x$, and the learning guarantees of the second stage of the algorithm and that $|S_{\gamma}|\leq \gamma^{-2}$ to upper bound the error due to approximating $u_x$ by $u_x''$.  
    Third, we upper bound term $($II$)$ 
    \begin{align*}
        (\mathrm{II})&\leq\max_{x\notin S_\gamma}\{|h_x|^{2/k+1}\}\sum_{x\notin S_\gamma}|h_x|^{2k/k+1}\leq (2\gamma\alpha^{-1}+c\alpha)^{2/k+1} C^{k},
    \end{align*}
    where in the first step we have used that  $2=2/(k+1)+2k/(k+1)$, and in the second step \cref{theo:NCBH} and \cref{eq:estimate4}. Finally, if we put everything together we get
    \begin{align*}
        \norm{H-H''}_2^2&\leq 2c^2\alpha^2+2\alpha^{-2}\beta^2(\gamma^{-2}+1)+(2\gamma\alpha^{-1}+c\alpha)^{2/k+1} C^{k}.
    \end{align*}
    Thus, if we take $\alpha=\eps^{k+1}C^{-k(k+1)/2},\ \gamma=\alpha^{2},$ and $ \beta=\alpha^{3}\eps$ it follows that $\norm{H-H''}_2^2\leq O(\eps^2),$ as desired. 
    
    \textbf{Query complexity and total evolution time.} Taking both stages of the algorithm into account we make $O(\gamma^{-4}\log(1/\delta)+\gamma^{-2}\beta^{-2}\log(1/(\gamma^{2}\delta)))$ queries to $U(\alpha)$. Hence, by taking $\alpha,\beta$ and $\gamma$ as above, one obtains the claimed query complexity and total evolution time.
\end{proof}

\paragraph{Acknowledgements.} I thank Srinivasan Arunachalam and Matthias Caro for detailed comments and advice about presentation. I thank Amira Abbas and Jop Briët for useful comments, discussions and encouragement. I thank  Andreas Bluhm, Arkopal Dutt and Aadil Oufkir for useful comments and discussions. This research was supported by the European Union’s Horizon 2020 research and innovation programme under the Marie Sk{\l}odowska-Curie grant agreement no. 945045, and by the NWO Gravitation project NETWORKS under grant no. 024.002.003.

\bibliographystyle{alphaurl}
\bibliography{Bibliography}
\end{document}